\newtheorem{thm}{Theorem}
\newtheorem{lem}[thm]{Lemma}
\theoremstyle{remark}
\newtheorem{defn}[thm]{Definition}
\newtheorem{rem}[thm]{Remark}
\newcommand{\ccirc}{\mathbin{\mathchoice
  {\xcirc\scriptstyle}
  {\xcirc\scriptstyle}
  {\xcirc\scriptscriptstyle}
  {\xcirc\scriptscriptstyle}
}}
\newcommand{\xcirc}[1]{\vcenter{\hbox{$#1\!\circ\!$}}}
\newcommand{\be}{\begin{equation}}
\newcommand{\ee}{\end{equation}}
\newcommand{\Ref}[1]{Ref.~\cite{#1}}
\newcommand{\Fig}[1]{Fig.~\ref{#1}}
\newcommand{\Eq}[1]{Eq.~\eqref{#1}}
\newcommand{\Sec}[1]{Sec.~\ref{#1}}
\begin{document}
\normalem
\title{The Boundary of the Future}
\author{Chris Akers, Raphael Bousso, Illan F. Halpern, and Grant N. Remmen}
 \email{cakers@berkeley.edu\\bousso@lbl.gov\\illan@berkeley.edu\\grant.remmen@berkeley.edu}
\affiliation{Center for Theoretical Physics and Department of Physics\\
University of California, Berkeley, CA 94720, USA}
\affiliation{Lawrence Berkeley National Laboratory, Berkeley, CA 94720, USA}
\bibliographystyle{utphys-modified}

\begin{abstract}
We prove that the boundary of the future of a surface $K$ consists precisely of the points $p$ that lie on a null geodesic orthogonal to $K$ such that between $K$ and $p$ there are no points conjugate to $K$ nor intersections with another such geodesic. Our theorem has applications to holographic screens and their associated light sheets and in particular enters the proof that holographic screens satisfy an area law.
\end{abstract}

\maketitle

\section{Theorem}
\label{intro}

In this paper, we prove the following theorem establishing necessary and sufficient conditions for a point to be on the boundary of the future of a surface in spacetime. (An analogous theorem holds for the past of $K$.)

\begin{thm}\label{Theorem}
Let $(M, g)$ be a smooth,\footnote{Nowhere in the proof will more than two derivatives be needed, so the assumption of smoothness for $M$ and $K$ can be relaxed everywhere in this paper to $C^2$. } globally hyperbolic spacetime and let $K$ be a smooth codimension-two submanifold of $M$ that is compact and acausal. Then a point $b\in M$ is on the boundary of the future of $K$ if and only if all of the following statements hold: 
\begin{enumerate}[(i)]
\item $b$ lies on a future-directed null geodesic $\gamma$ that intersects $K$ orthogonally.
\item $\gamma$ has no points conjugate to $K$ strictly before $b$.
\item $\gamma$ does not intersect any other null geodesic orthogonal to $K$ strictly between $K$ and $b$.
\end{enumerate}
\end{thm}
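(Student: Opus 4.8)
The plan is to collapse the biconditional into a single equality of parameters along each orthogonal null geodesic and then prove that equality by pairing the standard focusing argument with a limit-curve argument.

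\emph{Reductions.} I would first extract from the hypotheses the facts that activate the theory of achronal boundaries: because $K$ is compact and acausal in a globally hyperbolic $(M,g)$, the causal future $J^+(K)$ is closed, $\overline{I^+(K)}=J^+(K)$, and the boundary $\dot J^+(K)=\dot I^+(K)=J^+(K)\setminus I^+(K)$ is a closed achronal $C^0$ hypersurface. Standard structure theory then shows that every $b\in\dot J^+(K)\setminus K$ lies on a null geodesic generator of the boundary whose past endpoint---by global hyperbolicity together with compactness of $K$---lies on $K$, and a first-variation argument forces this generator to meet $K$ orthogonally, since any component of its initial tangent along $T_kK$ could be traded for timelike separation and would push $b$ into $I^+(K)$. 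This establishes (i) for every boundary point and selects the candidate geodesic $\gamma$.

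\emph{Core reduction.} For an orthogonal future-directed null geodesic $\gamma$ from $k\in K$, let $\lambda_c$ be the smallest affine parameter at which either a point conjugate to $K$ or an intersection with a second orthogonal generator first occurs, and set $\lambda^{*}=\inf\{\lambda>0:\gamma(\lambda)\in I^+(K)\}$. Since $I^+(K)$ is a future set and $\gamma$ is future-directed causal, $\{\lambda:\gamma(\lambda)\in I^+(K)\}=(\lambda^{*},\infty)$, so a point $\gamma(\lambda_b)=b$ lies on the boundary precisely when $\lambda_b\le\lambda^{*}$. The whole theorem therefore reduces to $\lambda^{*}=\lambda_c$: the forward direction ((ii) and (iii)) is the inequality $\lambda^{*}\le\lambda_c$, while the backward direction is $\lambda_c\le\lambda^{*}$.

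\emph{The inequality $\lambda^{*}\le\lambda_c$.} This is the more standard half. If $\gamma$ has a point conjugate to $K$ at $\lambda_c$, the second-variation (index-form) computation underlying Raychaudhuri focusing produces a fixed-endpoint timelike deformation of $\gamma|_{[0,\lambda_c+\epsilon]}$, so $\gamma(\lambda_c+\epsilon)\in I^+(K)$; if instead $\gamma$ meets a distinct orthogonal generator at $\lambda_c$, the broken null curve through that meeting point has a genuine corner and rounds off to a timelike curve from $K$. Either way the interior begins no later than $\lambda_c$, giving $\lambda^{*}\le\lambda_c$, which together with achronality yields (ii) and (iii).

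\emph{The inequality $\lambda_c\le\lambda^{*}$, and the main obstacle.} Here I would argue by contradiction, assuming $\lambda^{*}<\lambda_c$. Then $\gamma(\lambda^{*})\in\dot J^+(K)$, while for $\lambda$ slightly above $\lambda^{*}$ one has $\gamma(\lambda)\in I^+(K)$ and hence, by global hyperbolicity and compactness of $K$, a maximizing timelike geodesic $\sigma_\lambda$ from $K$ to $\gamma(\lambda)$ that is orthogonal to $K$ and focal-point-free before its endpoint. Since $d(K,\gamma(\lambda))\to d(K,\gamma(\lambda^{*}))=0$ as $\lambda\downarrow\lambda^{*}$, the limit curve theorem yields a limiting causal geodesic $\sigma^{*}$ from $K$ to $\gamma(\lambda^{*})$ of zero length, i.e.\ a null geodesic orthogonal to $K$. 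If $\sigma^{*}$ is distinct from $\gamma|_{[0,\lambda^{*}]}$, then $\gamma(\lambda^{*})$ is an intersection of two orthogonal generators at $\lambda^{*}<\lambda_c$, a contradiction; if $\sigma^{*}=\gamma|_{[0,\lambda^{*}]}$, then the collapse of the distinct timelike maximizers $\sigma_\lambda$ onto $\gamma$ must force a point conjugate to $K$ at $\lambda^{*}$, again contradicting $\lambda^{*}<\lambda_c$. The main obstacle is precisely this last step: making rigorous that a sequence of genuinely timelike maximizers limiting onto a focal-point-free null geodesic is impossible---equivalently, upgrading the local maximality furnished by the positive-definite index form to a statement about the globally defined curves $\sigma_\lambda$, with the requisite lower semicontinuity of the location of conjugate points under the limit. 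This is the Lorentzian, codimension-two analog of the Riemannian cut-locus dichotomy, and global hyperbolicity enters essentially to guarantee both the maximizers and their convergent limit.
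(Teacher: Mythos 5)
Your overall architecture is essentially the paper's: the ``only if'' half via standard focusing plus corner-rounding, and the ``if'' half via timelike maximizing geodesics from $K$ to points of $I^+(K)$ just beyond the departure point $\gamma(\lambda^*)$, a limit-curve argument, and the dichotomy (limit curve distinct from $\gamma$ gives a crossing, limit curve equal to $\gamma$ should give conjugacy). But the step you yourself flag as ``the main obstacle''---showing that genuinely timelike maximizers collapsing onto a conjugate-point-free null $\gamma$ is impossible---is precisely the nontrivial content of the theorem, and your proposal does not supply it. As written this is a genuine gap: you reduce the theorem to the equality $\lambda_c\le\lambda^*$ and then, in the only hard case, appeal to an unproven claim. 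Moreover, the route you sketch for closing it (lower semicontinuity of the location of conjugate points under limits of maximizers, upgrading index-form local maximality to a global statement) is not the natural one and would be delicate to carry out.

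The paper closes this gap with no index-form or semicontinuity argument at all, by exploiting the surface-orthogonal exponential map $\exp_K:NK\to M$ on the \emph{full} normal bundle, which contains timelike as well as null normal vectors. When $\gamma$ is the only limit curve, each point $q_n=\gamma(s_n)\in I^+(K)$ has two \emph{distinct} preimages in $NK$: $(p,k_n)$ with $k_n$ null (from $\gamma$) and $(p_n,v_n)$ with $v_n$ timelike (from the maximizer $\gamma_n$), and both sequences converge to $(p,v)$ with $\exp_K(p,v)=q$. Hence $\exp_K$ is injective on no neighborhood of $(p,v)$, so by the inverse function theorem (Lemma~\ref{lem-iso}) its differential is singular there, i.e., $(p,v)$ is a conjugate point. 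That is the clean replacement for your missing step. Two further omissions: first, condition (ii) of the theorem uses the refined notion ``conjugate to $K$'' (Def.~\ref{def-conjtoK}: a Jacobi field tangent to $K$ at $p$, nonvanishing there, vanishing at $q$), which is strictly stronger than singularity of $\exp_{K*}$; the paper needs the codimension-two Lemma~\ref{lem-conjtoK} to show the kernel vector of $\exp_{K*}$ has nonvanishing projection onto $T_pK$, so the singular point really is conjugate to the surface---without this, even a completed version of your argument would contradict only a weaker form of (ii). Second, your crossing case needs $\lambda^*>0$ (an intersection at $K$ itself does not violate (iii), which concerns intersections strictly between $K$ and $b$); the paper secures this from acausality of $K$ via the generalization of Proposition 4.5.1 of \Ref{HawEll}.
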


Theorem~\ref{Theorem} enumerates the conditions under which a light ray, launched normally from a surface, can exit the boundary of the future of that surface and enter its chronological future. In essence, this happens only when the light ray either hits another null geodesic launched orthogonally from the surface or when the light ray encounters a caustic, in a sense that will be made precise in terms of special conditions on the deviation vectors for a family of infinitesimally-separated geodesics. These two possibilities for the fate of the light ray are illustrated in \Fig{fig:par}.
\begin{figure}[h]
\includegraphics[width=0.8\columnwidth]{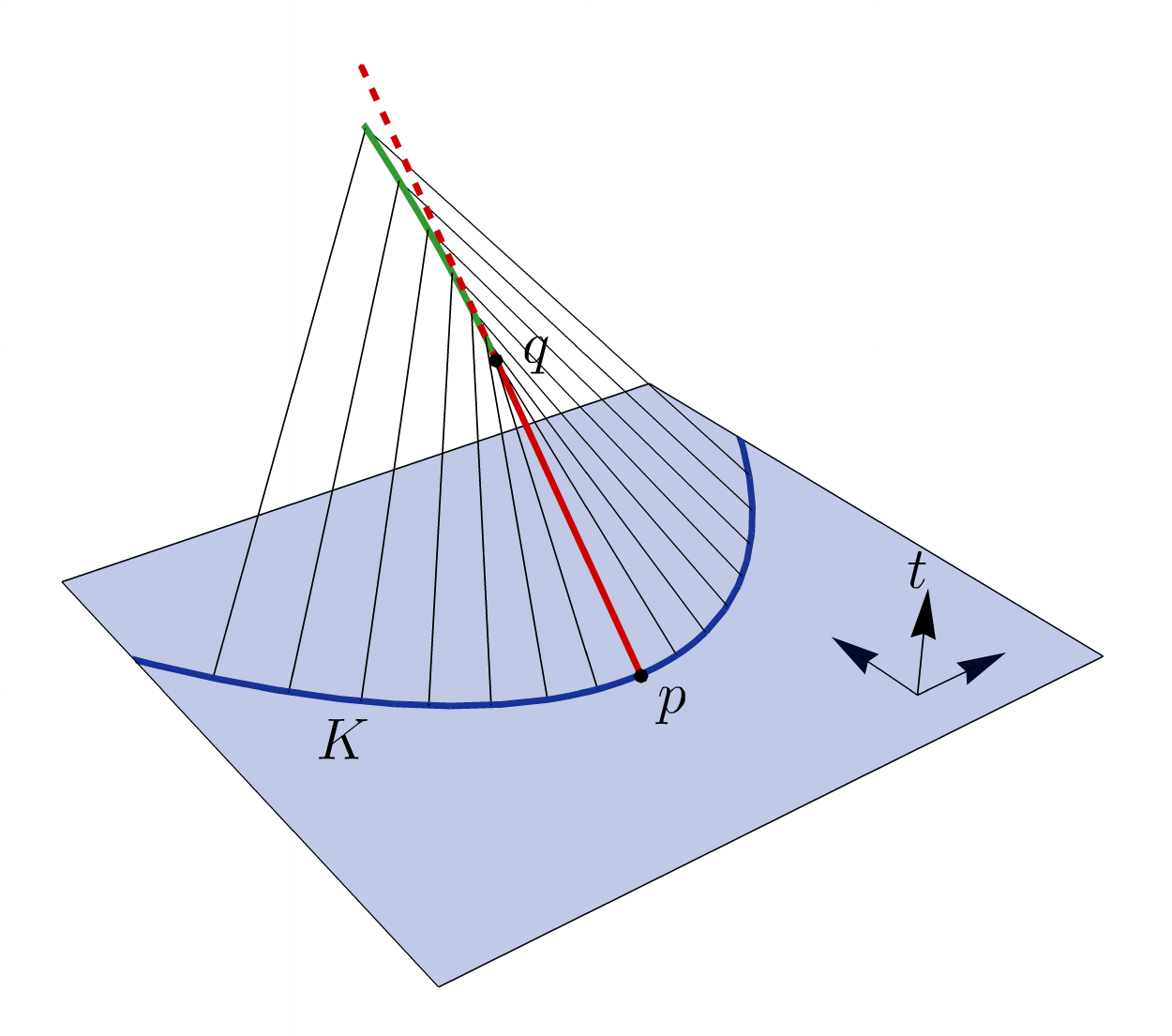}
\caption{Possibilities for how a null geodesic orthogonal to a surface can exit the boundary of its future. In this example, a parabolic surface $K$ (blue line) lies in a particular spatial slice. A future-directed null geodesic (red line) is launched orthogonally from $p$. At $q$, it encounters a caustic, entering the interior of the future of $K$ (red dashed line).  The point $q$ is conjugate to $K$. Other null geodesics orthogonal to $K$ (black lines) encounter nonlocal intersections with other such geodesics along the green line, where they exit the boundary of the future of $K$.}\label{fig:par}
\end{figure}

The theorem is useful for characterizing the causal structure induced by spatial surfaces. In particular, if $K$ splits a Cauchy surface into two parts, then Theorem~\ref{Theorem} implies that the four orthogonal null congruences fully characterize the associated split of the spacetime into four portions: the future and past of $K$ and the domains of dependence of each of the two spatial sides (see Fig.~\ref{fig-split}). This is of particular interest when $K$ is a holographic screen~\cite{CEB2}. Then some of the orthogonal congruences form {\em light sheets}~\cite{CEB1} such that the entropy of matter on a light sheet is bounded by the area of $K$. This relation makes precise the notion that the universe is like a ``hologram''~\cite{Tho93,Sus95,FisSus98} and should be described as such in a quantum gravity theory. Such holographic theories have indeed been identified for a special class of spacetimes~\cite{Mal97}.

Specifically, Theorem~\ref{Theorem} plays a role in the recent proof of a novel area theorem for holographic screens~\cite{BouEng15a,BouEng15b}, where it was assumed without proof. It also enters the analogous derivation of a related Generalized Second Law in cosmology~\cite{BouEng15c} from the Quantum Focusing Conjecture~\cite{BouFis15a}.

Although our motivation lies in applications to General Relativity and quantum gravity, we stress that the theorem itself is purely a statement about Lorentzian geometry. It does not assume Einstein's equations and so in particular does not assume any conditions on the stress tensor of matter.

\begin{figure}[h]
\includegraphics[width=0.75\columnwidth]{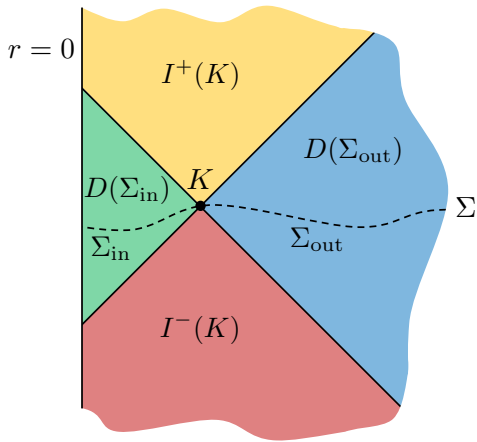}
\caption{In this generic Penrose diagram, the codimension-two surface $K$ (black dot) splits a Cauchy surface $\Sigma$ (dashed line) into two parts $\Sigma_{\rm in}$, $\Sigma_{\rm out}$. This induces a splitting of the spacetime $M$ into four parts: the past and future of $K$ (red, yellow) and the domains of dependence of $\Sigma_{\rm in}$ and $\Sigma_{\rm out}$ (green, blue)~\cite{BouEng15b}. Theorem~\ref{Theorem} guarantees that this splitting is fully characterized by the four orthogonal null congruences originating on $K$ (black diagonal lines).}\label{fig-split}
\end{figure}

\paragraph*{Related Work.} Parts of the ``only if'' direction of the theorem are a standard textbook result~\cite{Wald}, except for $(iii)$, which we easily establish. The  ``if'' direction is nontrivial and takes up the bulk of our proof.

\Ref{Beem} considers the {\it cut locus}, i.e., the set of all {\it cut points} associated with geodesics starting at some point $p\in M$. Given a geodesic $\gamma$ originating at $p$, a future null cut point, in particular, can be defined in terms of the Lorentzian distance function or equivalently as the final point on $\gamma$ that is in the boundary of the future of $p$. As shown in Theorem~5.3 of \Ref{Beem}, if $q$ is the future null cut point on $\gamma$ of $p$, then either $q$ corresponds to the first future conjugate point of $p$ along $\gamma$, or another null geodesic from $p$ intersects $\gamma$ at $q$, or both. Our theorem can be viewed as an analogous result for geodesics orthogonal to codimension-two surfaces and a generalization of our theorem implies the result of \Ref{Beem} as a special case. The codimension-two surfaces treated by our theorem are of significant physical interest due to the important role of holographic screens in the study of quantum gravity (see, e.g., Ref.~\cite{NettaAron2017} for very recent results on the coarse-grained black hole entropy). We encountered nontrivial differences in proving the theorem for surfaces. Moreover our condition ($ii$) places stronger constraints on the associated deviation vector, as we discuss in \Sec{sec-jac}.\footnote{After this paper first appeared, we were made aware of Refs.~\cite{X,Y}, which also generalize the results of \Ref{Beem} to codimension-two surfaces. Our work goes further in that we more strongly constrain the type of conjugacy to be that of Def.~\ref{def-conjtoK}. This is crucial for making contact with the notion of points ``conjugate to a surface'' used in the physics literature, e.g., in \Ref{Wald}.}

The previously known parts of the ``only if'' direction of Theorem~\ref{Theorem} were originally established in the context of proving singularity theorems  \cite{Penrose2, Hawking2}.  It would be interesting to see whether Theorem~\ref{Theorem} can be used to derive new or stronger results on the formation or the cosmic censorship of spacetime singularities.

\paragraph*{Generalizations.} 

As we are only concerned with the causal structure, the metric can be freely conformally rescaled. Thus, a version of Theorem~\ref{Theorem} still holds for noncompact $K$, as long as it is compact in the conformal completion of the spacetime, i.e., in a Penrose diagram. A situation in which this may be of interest is for surfaces anchored to the boundary of anti-de Sitter space.

Furthermore, the theorem can be generalized to surfaces of codimension other than two, but in that case we can say less about the type of conjugate point that orthogonal null geodesics may encounter. We will discuss this further in \Sec{proof}.

\paragraph*{Notation.} Throughout, we use standard notation for causal structure. A causal curve is one for which the tangent vector is always timelike or null. The causal (respectively, chronological) future of a set $S$ in our spacetime $M$, denoted by $J^+(S)$ (respectively, $I^+(S)$) is the set of all $q\in M$ such that there exists $p\in S$ for which there is a future-directed causal (respectively, timelike) curve in $M$ from $p$ to $q$. For the past ($I^-(S)$, $J^-(S)$, etc.), similar definitions apply. We will denote the boundary of a set $S$ by $\dot{S}$. Standard results \cite{Wald} include that $I^\pm(S)$ is open and that $\dot{J}^\pm(S)=\dot{I}^\pm(S)$. We will call a set $S$ {\em acausal} if there do not exist distinct $p,q\in S$ for which there is a causal path in $M$ from $p$ to $q$. A spacetime is said to be {\em globally hyperbolic} if it contains no closed causal curves and if $J^+(p)\cap J^-(q)$ is compact for all $p,q\in M$. Equivalently~\cite{Geroch}, $M$ has the topology of $\Sigma \times \mathbb{R}$ for some {\em Cauchy surface} $\Sigma$; that is, $\Sigma$ is a surface for which, for all $p\in M$, every inextendible timelike curve through $p$ intersects $\Sigma$ exactly once.

\paragraph*{Outline.} In \Sec{def}, we review the notion of a conjugate point and establish some useful lemmas. In \Sec{proof}, we prove Theorem~\ref{Theorem}.

\section{Conjugate Points to a Surface}
\label{def}

\subsection{Exponential Map}

Let $(M,g)$ be a smooth, globally hyperbolic spacetime of dimension $n>2$. Thus, $M$ is a manifold with metric $g$ of signature $(-,+,\ldots,+)$. (As already noted, we will be concerned only with the causal structure of $M$, so $g$ need only be known up to conformal transformations.)

For $p\in M$, let $T_pM$ be the tangent vector space at $p$ and let $TM \equiv \bigcup_{p\in M} \{p\}\times T_pM$ be the tangent bundle of $M$. $TM$ has a natural topology that makes it a manifold of dimension $2n$. In the open subsets associated with charts of $M$, $TM$ is diffeomorphic to open subsets of $\mathbb{R}^{2n},$ corresponding to $n$ coordinates for the location of $p\in M$ and $n$ components of a tangent vector $v\in T_pM$.  The tangent space of $TM$ at $(p,v)$ is
\be
T_{p,v}TM = T_pM \times T_vT_pM.
\ee
For every $(p,v)\in TM$, there is a unique inextendible geodesic,
\be 
 c_{p,v}: (a,b) \to M, ~s \mapsto c_{p,v}(s),
\ee
where $a,b \in \mathbb{R}\cup\{-\infty,\infty\}$,  with affine parameter $s$ and tangent vector $v \in T_pM$ given by the pushforward of $d/ds$ by $c_{p,v}$ at the point $p=c_{p,v}(0)\in M$. It is convenient to include the degenerate curves obtained with $v=0$.

\begin{defn}
The {\em exponential map} is defined by:\footnote{If the spacetime is not geodesically complete, the exponential map can only be defined on the subset of $TM$ consisting of the $(p,v)$ such that $c_{p,v}$ can be extended to $\lambda=1.$  This restriction will be left implicit in this paper.}
\be 
\exp\, :~ TM \to M, ~(p,v)\mapsto c_{p,v}(1).
\ee
\end{defn}

Restrictions of $\exp$ to submanifolds of $TM$ are frequently of interest. To study the congruence of geodesics emanating from a given point, one may restrict to $\exp_p:\,T_pM\to M\, , ~ v\mapsto c_{p,v}(1)$. Moreover, one can define the differential of $\exp_p$, $\exp_{p*}:T_v T_p M\rightarrow T_{c_{p,v}(1)}M$, which describes how $\exp_p v$ varies due to small changes in $v$. See \Fig{fig:exp} for an illustration of the exponential map and its differential. In this paper, we will consider a different restriction suited to the study of the geodesics orthogonal to a given spatial surface; we will define the differential in more detail for this restriction below. 

\begin{figure}[h]
\includegraphics[width=0.5\columnwidth]{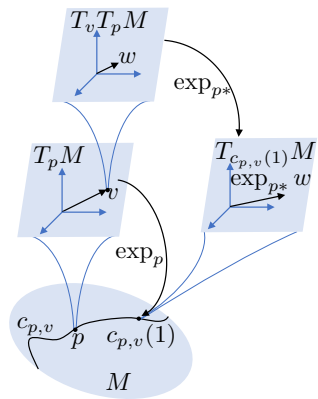}
\caption{An illustration of the exponential map $\exp$, which takes a vector in $TM$ to a point in $M$, and the Jacobian of the exponential map, which takes a vector in the tangent space $TTM$ of $TM$ to a vector in $TM$. }\label{fig:exp}
\end{figure}

Let $K\subset M$ be a smooth submanifold.  We consider the normal bundle 
$$NK\equiv\bigcup_{p\in K}\{p\}\times T_p K^\perp,$$ 
where $ T_p K^\perp$ is the two-dimensional tangent vector space perpendicular to $K$ at $p$. The normal bundle has the structure of an $n$-dimensional manifold. Its tangent space at $(p,v)\in NK$ is
\begin{equation}
T_{p,v}NK=T_p K\times T_v T_p K^\perp.
\label{eq-factor}
\end{equation} 
Here, $T_p K$ is the tangent space of $p$ in the manifold $K$; that is, $T_pK$ is the subspace of $T_p M$ normal to $T_pK^\perp$. Note that $T_p K$ is of the same dimension as $K$.
\begin{defn}
   The {\em surface-orthogonal exponential map}
  \be 
  \exp_K\,:~ NK\to M\, , ~(p,v)\mapsto c_{p,v}(1)
  \ee
  is the restriction of $\exp$ to $NK$. 
\end{defn}
\begin{defn}
  The {\em Jacobian} or {\em differential} of the exponential map is given by
 \be 
 \exp_{K*}\,:~T_{p,v}NK \to TM\, ,~ w \mapsto \exp_{K*} w.
 \ee
It is a linear map between vectors that captures the response of $\exp_K$ to small variations in its argument. It is defined by requiring that $(\exp_{K*}w)(f)=w(f\ccirc\exp_K)$ for any function $f:\,M\to \mathbb{R}$. Note $\exp_{K*} w$ is the {\em pushforward} of $w$ by $\exp_K$. If $x^\alpha$ are coordinates in an open neighborhood of $(p,v) \in NK$ and $y^\beta$ are coordinates in an open neighborhood of $\exp_K (p,v)\in M$ and we write the vectors in coordinate form, $w= \sum w^\alpha (\partial/\partial x^\alpha)$ and $\exp_{K*} w = \sum \hat w^\beta (\partial/\partial y^\beta)$, then the components are related by the {\em Jacobian matrix},
  \be 
  \hat w^\beta= \sum_\alpha \frac{\partial y^\beta}{\partial x^\alpha} w^\alpha.
  \ee
\end{defn}
See \Fig{fig:expK*} for an illustration of $\exp_K$, $\exp_{K*}$, and the various tangent spaces used in this paper.

\begin{figure}[h]
\includegraphics[width=\columnwidth]{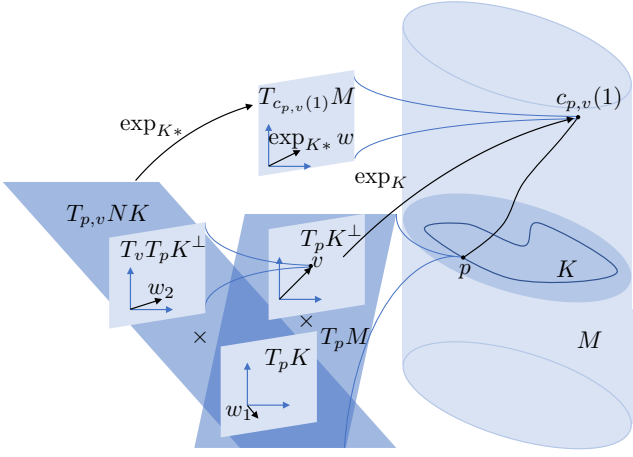}
\caption{An illustration of the surface-orthogonal exponential map $\exp_K$ evaluated at $p\in K$, which takes a vector in $T_p K^\perp$ to a point $c_{p,v}(1)$ in $M$. Here, as in text, the tangent space at $p$, $T_p M$, is broken up as a product $T_p K^\perp \times T_p K$. Also shown is the Jacobian $\exp_{K*}$ at $v\in T_p K^\perp$, which takes a vector $w=(w_1,w_2)\in T_{p,v}NK=T_p K \times T_v T_p K^\perp$ to a vector in $T_{c_{p,v}(1)}M$.}\label{fig:expK*}
\end{figure}

\begin{defn}
A Jacobian is an {\em isomorphism} if it is invertible, i.e., if it has no eigenvectors with eigenvalue zero. 
\end{defn}
Since $(M,g)$ and $K$ are smooth, $\exp_K$ is smooth. The inverse function theorem \cite{rudin} thus implies the following.
\begin{lem}\label{lem-iso}
If the Jacobian $\exp_{K*}$ at $(p,v)\in NK$ is an isomorphism, then $\exp_K$ is a diffeomorphism of an open neighborhood of $(p,v)$ onto an open neighborhood of $\exp_K(p,v)\in M$.
\end{lem}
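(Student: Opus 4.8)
The plan is to reduce the statement to the classical inverse function theorem for maps between open subsets of $\mathbb{R}^n$. The two ingredients I need are that $\exp_K$ is a smooth map between manifolds of equal dimension and that its differential at $(p,v)$ is a linear isomorphism; both are already in hand. The normal bundle $NK$ is an $n$-dimensional manifold by the discussion surrounding \Eq{eq-factor}, the target $M$ is likewise $n$-dimensional, and $\exp_K$ is smooth because $(M,g)$ and $K$ are smooth. By hypothesis the Jacobian $\exp_{K*}$ at $(p,v)$ is an isomorphism, i.e.\ an invertible linear map $T_{p,v}NK\to T_{\exp_K(p,v)}M$ between two tangent spaces of the same dimension $n$.

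First I would choose a chart $x^\alpha$ on an open neighborhood $U\subset NK$ of $(p,v)$ and a chart $y^\beta$ on an open neighborhood $V\subset M$ of $\exp_K(p,v)$, shrinking $U$ if necessary so that $\exp_K(U)\subset V$. In these coordinates $\exp_K$ becomes a smooth map between open subsets of $\mathbb{R}^n$, whose derivative at the coordinate image of $(p,v)$ is represented by the Jacobian matrix $\partial y^\beta/\partial x^\alpha$ introduced above. Because $\exp_{K*}$ is an isomorphism, this matrix has no zero eigenvalue and hence nonzero determinant, so it is invertible.

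Then I would invoke the inverse function theorem~\cite{rudin}: a continuously differentiable map between open subsets of $\mathbb{R}^n$ whose derivative is invertible at a point restricts to a diffeomorphism of some open neighborhood of that point onto an open neighborhood of its image. Pulling this conclusion back through the charts $x^\alpha$ and $y^\beta$, which are themselves diffeomorphisms onto their coordinate images, yields an open neighborhood of $(p,v)$ in $NK$ on which $\exp_K$ is a diffeomorphism onto an open neighborhood of $\exp_K(p,v)$ in $M$, as claimed.

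I do not expect a genuine obstacle here, since the mathematical content is entirely supplied by the inverse function theorem once the matching of dimensions, smoothness, and invertibility of the differential have been noted. The only care required is the bookkeeping needed to ensure the two charts are compatible and that the domains are shrunk consistently so that the diffeomorphic domain and image correspond under $\exp_K$.
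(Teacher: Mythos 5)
Your proposal is correct and follows essentially the same route as the paper, which proves the lemma simply by noting that smoothness of $(M,g)$ and $K$ makes $\exp_K$ smooth and then invoking the inverse function theorem. Your chart-based bookkeeping (equal dimensions of $NK$ and $M$, invertibility of the coordinate Jacobian, pulling the conclusion back through the charts) is just the standard unpacking of that one-line citation.
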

\begin{defn}
\label{def-cpk}
The exponential map $\exp_K$ is called {\em singular} at $(p,v)\in NK$ if $\exp_{K*}$ is not an isomorphism. Then $(p,v)$ is called a {\em conjugate point} in $NK$. 
\end{defn}

\subsection{Jacobi Fields} \label{sec-jac}

It is instructive to relate the above definition of {\em conjugate point} to an equivalent definition in terms of Jacobi fields. 
\begin{defn}
Let $Q$ be an open set in $\mathbb{R}^2$ and let $f:\,Q\to M,\, (r,s)\mapsto f(r,s)$ be a smooth map. If the curves of constant $r$ and varying $s$, $\gamma_r: Q\to M, s\mapsto f(r,s)$, are geodesics in $M$, then $f$ is called a {\em one-parameter family (or congruence) of geodesics}.
\end{defn}
\begin{defn}
Let $\partial_s$ denote the partial derivative with respect to $s$. It follows from the above definition that the pushforward $S\equiv f_*(\partial_s)\in TM$ is tangent to any geodesic $\gamma_r$. Similarly, $R\equiv f_*(\partial_r)\in TM$ is tangent to any curve $\mu_s:Q\to M, r\mapsto f(r,s)$ at fixed $s$. For general families of curves, $R$ represents the deviation vector field of the congruence. In the special case of a geodesic congruence, $R$ restricted to any $\gamma_r$ is called a {\em Jacobi field} on $\gamma_r$.
\end{defn}
\begin{rem}
The Jacobi field $R$ satisfies the geodesic deviation equation on $Q$, 
\begin{equation}
D^2_S R = \mathcal{R}(S,R) S,
\label{eq-gde}
\end{equation}
where $\mathcal{R}(A,B) \equiv [D_A,D_B]-D_{[A,B]}$ is the curvature tensor~\cite{Hicks,Wald} and $D_V = V^\mu \nabla_\mu$ is the covariant derivative, defined with respect to the Levi-Civita connection, along a vector $V$.
\end{rem}
The exponential map can be used to generate a one-parameter family of geodesics and its derivative $\exp_*$ generates the associated Jacobi fields. We first recall the more familiar case of geodesics through a point $p$, generated by $\exp_p$, as follows.
\begin{rem}
Let $\hat R, \hat S\in T_pM$ and let $\tilde R$ and $\tilde S$ be the naturally associated constant vector fields in $TT_pM$.\footnote{Concretely, one can first choose a neighborhood $U$ of $p$ diffeomorphic to $\mathbb{R}^n$, which exists since $M$ is a manifold, and then choose a map $\phi:U\rightarrow T_p M$ such that the pushforward $\phi_*$ is the identity map from $T_p M$ to $T_v T_p M$ for some $v$; then $\tilde R$ and $\tilde S$ can be defined as $\tilde R=\phi_* \hat R$ and $\tilde S = \phi_* \hat S$ for $v=\hat R$ or $\hat S$, respectively.} Then $f(r,s) = \exp_p[s(\hat S+r\hat R)]$ is smooth and defines a one-parameter family of geodesics. Its tangent vector field is $S=\left.\exp_{p*}\right|_{s(\hat S+r\hat R)} ( \tilde S+r \tilde R)$ and its deviation or Jacobi field is $R=\left.\exp_{p*}\right|_{s(\hat S+r\hat R)} s\tilde R$.\footnote{The subscript is the point where the Jacobian map is evaluated. The vector the Jacobian acts on appears to its right.} It is clear from this construction that $\exp_{p}$ is singular (i.e., $\exp_{p*}$ fails to be an isomorphism) at $s(\hat S+r\hat R)$ if and only if there exists a nontrivial Jacobi field of the geodesic $\gamma_r$ that vanishes at $f(r,s)$ and $f(r,0).$ This establishes the equivalence of two common definitions of conjugacy to a point $p$.
\end{rem}

\begin{rem} A conjugate point in a geodesic congruence with tangent vector $k^\mu$ corresponds to a {\em caustic}, which is a point at which the {\em expansion} $\theta = \nabla_\mu k^\mu$ goes to $-\infty$. 
\end{rem}
We turn to the case relevant to this paper: a one-parameter family of geodesics orthogonal to a smooth, compact, acausal, codimension-two submanifold $K$. (For example, $K$ could be a topological sphere at an instant of time.) Subject to this restriction, the map $f$ and vector fields $R$ and $S$ are defined as before, with $T_p M$ replaced by $T_p K^\perp$. One can choose the parameters $(r,s)$ such that $f(r,0)\in K$ and $f(0,0)=p$. The map $\nu:r\mapsto (f(r,0),S|_{(r,0)})$ is a smooth curve in $NK$ with tangent vector $\bar R\in TNK$. From this curve, the one-parameter family can be recovered as 
\begin{equation}
f(r,s) = \exp_{f(r,0)} sS|_{(r,0)} = \exp_K (f(r,0),sS|_{(r,0)}).
\label{eq-construct}
\end{equation}
\begin{rem}\label{rem:initval}
We will be interested in the Jacobi field $R\equiv f_* \partial_r$ only along one geodesic, say $\gamma$ at $r=0$. By \Eq{eq-gde} this depends only on the initial data $S$ and $\bar R$ at $p$.  Thus $R|_{(0,s)}$ will be the same for any curve $\nu$ with tangent vector $\bar R$ at $(p,S|_{(0,0)})\in NK$. Conversely, one can extend any given $\bar R$ at $(p,S|_{(0,0)})\in NK$ to a (non-unique) one-parameter family of geodesics by picking such a curve $\nu$. We now take advantage of this freedom in order to find an explicit expression for the Jacobi field in terms of $\exp_{K*}$.
\end{rem}

By \Eq{eq-factor}, one can uniquely decompose $\bar R = (\check R ,\tilde R)$, with $\check R\in T_pK$ and $\tilde R\in T_ST_pK^\perp$. Let $\pi$ be the defining projection of the fiber bundle, $\pi: NK\to K$. Then $\mu \equiv \pi(\nu)$ is a curve on $K$ with tangent vector $\check R$ at $p$. Let $f(r,0)=\mu(r)$.  

Further, let $S|_{(r,0)}\in T_{f(r,0)}K^\perp$ be defined by $K$-normal parallel transport\footnote{Given a vector $v\in T_p K^\perp$, normal parallel transport defines a vector field $v(r)$ along $\mu$ normal to $K$ such that the normal component of its covariant derivative along $\mu$ vanishes, $D_r^\perp v(r)=0$. Given $\mu(r)$ and the initial vector in $T_p K^\perp$, $v(r)$ is unique by Lemma 4.40 of \Ref{oneill}.\label{npt}} of the vector $S|_{(0,0)}+r \hat R \in T_pK^\perp$ along $\mu$ from $p$ to $\mu(r)$. Here $\hat R \in T_pK^\perp$ is the vector naturally associated with $\tilde R \in T_S T_p K^\perp.$ Similarly, we define $\tilde S \in T_S T_p K^\perp$ to be the vector naturally associated with $\left.S\right|_{(0,0)}.$
\begin{lem}
\label{lem-recon}
With the above choices and definitions, \Eq{eq-construct} yields a suitable one-parameter family of geodesics. The corresponding Jacobi field and tangent vector along $\gamma$ can be written as:
\begin{equation}
R|_{(0,s)} \equiv \left. f_* \partial_r\right|_{(0,s)} = \left.\exp_{K*}\right|_{(p,sS|_{(0,0)})}(\check R,s \tilde R) \label{eq-r}
\end{equation}
and
\begin{equation}
S|_{(0,s)} \equiv \left. f_* \partial_s \right|_{(0,s)} = \left.\exp_{K*}\right|_{(p,sS|_{(0,0)})}(0, \tilde S), \label{eq-s}
\end{equation}
respectively.
\end{lem}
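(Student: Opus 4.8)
The plan is to reduce both equations to the functoriality (chain rule) of the pushforward, after recognizing $f$ at fixed $s$ and at fixed $r$ as compositions of $\exp_K$ with explicit curves in $NK$. First I would check that \Eq{eq-construct}, with the stated choices of $\mu$ and of $S|_{(r,0)}$, defines a legitimate one-parameter family of geodesics: smoothness is immediate since $\mu$, normal parallel transport, and $\exp_K$ are all smooth, and each constant-$r$ curve $s\mapsto \exp_K(\mu(r),sS|_{(r,0)})$ is by the definition of $\exp_K$ the geodesic through $\mu(r)$ with initial tangent $S|_{(r,0)}$, which lies in $T_{\mu(r)}K^\perp$ because it is the normal parallel transport of a vector in $T_pK^\perp$. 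To establish that the family is \emph{suitable} in the sense of Remark~\ref{rem:initval}, I would compute the tangent $\nu_*\partial_r|_{r=0}$ of the curve $\nu(r)=(\mu(r),S|_{(r,0)})$: its projection under $\pi$ to $T_pK$ is $\mu_*\partial_r=\check R$, while its vertical part is the normal covariant derivative $D_r^\perp S|_{(r,0)}|_{r=0}=\hat R$, the last equality holding precisely because $S|_{(r,0)}$ is by construction the parallel transport of $S|_{(0,0)}+r\hat R$. Hence $\nu_*\partial_r|_{r=0}=(\check R,\tilde R)=\bar R$ carries the prescribed initial data, so Remark~\ref{rem:initval} guarantees the resulting Jacobi field is the one of interest.

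Next, for \Eq{eq-r} I would fix $s$ and write $f(\cdot,s)=\exp_K\circ g_s$ with $g_s(r)\equiv(\mu(r),sS|_{(r,0)})\in NK$, which is well defined since $sS|_{(r,0)}\in T_{\mu(r)}K^\perp$. The chain rule then gives $R|_{(0,s)}=f_*\partial_r=\left.\exp_{K*}\right|_{(p,sS|_{(0,0)})}\!\big[(g_s)_*\partial_r|_{r=0}\big]$, so it only remains to decompose $(g_s)_*\partial_r|_{r=0}\in T_{(p,sS|_{(0,0)})}NK$ according to \Eq{eq-factor}. Its base component is again $\check R$, and its vertical component is the normal covariant derivative of the fiber coordinate $sS|_{(r,0)}=s\cdot(\text{parallel transport of }S|_{(0,0)}+r\hat R)$, namely $s\hat R$, which is identified with $s\tilde R$. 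This yields $(g_s)_*\partial_r|_{r=0}=(\check R,s\tilde R)$ and hence \Eq{eq-r}. Equation~\eqref{eq-s} is analogous but simpler: at fixed $r=0$ I write $f(0,\cdot)=\exp_K\circ h$ with $h(s)=(p,sS|_{(0,0)})$, whose tangent has vanishing base component (the base point $p$ is fixed) and vertical component $\tfrac{d}{ds}(sS|_{(0,0)})=S|_{(0,0)}$, identified with $\tilde S$, giving \Eq{eq-s}.

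The main obstacle is the careful bookkeeping of the vertical component, where the extra factor of $s$ in \Eq{eq-r} originates. The fiber coordinate of $g_s$ is $sS|_{(r,0)}$, and because the splitting of $TNK$ in \Eq{eq-factor} is effected by the normal connection, the vertical part must be read off as a normal covariant derivative rather than a naive coordinate derivative; this is exactly where the parallel-transport construction of $S|_{(r,0)}$ pays off, since it makes the fiber coordinate linear in $r$ in a parallel frame and so produces the constant $s\hat R$. Some further care is needed to keep the canonical identification between $T_pK^\perp$ and its tangent spaces $T_vT_pK^\perp$ consistent across the distinct base points $v=S|_{(0,0)}$ and $v=sS|_{(0,0)}$ at which the two Jacobians are evaluated.
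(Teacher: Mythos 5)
Your proposal is correct and follows essentially the same route as the paper's own proof: both factor $f$ through $\exp_K$ and a curve in $NK$, apply the chain rule, and reduce the vertical component to the observation that normal parallel transport kills the transport part of the $r$-derivative, leaving only $\hat R$ from the linear dependence of the initial vector (the paper's $r_1,r_2$ splitting is your covariant-derivative bookkeeping in coordinates). Your explicit check that $\nu_*\partial_r|_{r=0}=(\check R,\tilde R)=\bar R$, and your remark that the splitting in \Eq{eq-factor} is effected by the normal connection, simply make explicit what the paper leaves implicit.
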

See App.~\ref{app} for a proof of Lemma~\ref{lem-recon} via a direct calculation.

We note that $\check R$ and $\tilde R$ encode the initial value and derivative, respectively, of $R$, in accordance with the initial value problem set up in Remark~\ref{rem:initval}.
From \Eq{eq-r}, we obtain a criterion for conjugacy equivalent to that of Def.~\ref{def-cpk}:
\begin{rem}
\label{rem-kequiv}
In the above notation, the map $\exp_K$ is singular at $(p,sS|_{(0,0)})\in NK$ if and only if the geodesic $\gamma$ possesses a nontrivial Jacobi field that vanishes at $\exp_K (p,sS|_{(0,0)})$ and is tangent to $K$ at $p.$
\end{rem}
Specifically, our interest lies in null geodesics orthogonal to $K$. We now show that their conjugate points satisfy an additional criterion on the associated eigenvector of $\exp_{K*}$.
\begin{lem} \label{lem-conjtoK}
Let $\gamma$ be a geodesic orthogonal to $K$ at $p$, with conjugate point $(p,sS|_{(0,0)})\in NK$. By Def.~\ref{def-cpk} there exists a nonzero vector $\bar R\in T_{p,sS|_{(0,0)}}NK$ such that $\left. \exp_{K*}\right|_{(p,sS|_{(0,0)})} \bar R=0$. If $\gamma$ is null, i.e., if $\Vert{S|_{(0,0)}}\Vert=0$, then the projection of $\bar R$ onto $T_pK$ is nonvanishing: $\check R \neq 0$. 
\end{lem}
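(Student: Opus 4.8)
The plan is to argue by contradiction, supposing $\check R=0$ and deriving a contradiction with the nullity of $\gamma$. Since $\bar R=(\check R,\tilde R)\neq 0$, the assumption $\check R=0$ forces $\tilde R\neq 0$, so the associated normal vector $\hat R\in T_pK^\perp$ is nonzero. Using \Eq{eq-r} of Lemma~\ref{lem-recon} (after a harmless rescaling of the normal datum $\tilde R$ that converts the kernel vector of $\exp_{K*}$ at the conjugate point into Jacobi initial data), I would produce a nontrivial Jacobi field $R$ along $\gamma$ that vanishes both at $p$ (i.e., at $s=0$) and at the conjugate point $q=\exp_K(p,s_0S|_{(0,0)})$. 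The two properties of $R$ I need are that its initial value is $R|_{(0,0)}=\check R=0$ and that its initial derivative is $D_SR|_{(0,0)}=\hat R\in T_pK^\perp$. The latter holds precisely because $\check R=0$ makes the base curve $\mu=\pi(\nu)$ have vanishing initial velocity, so the extrinsic-curvature (shape-operator) contribution to $D_SR|_{(0,0)}$, which is linear in $R|_{(0,0)}=\check R$, drops out.

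The key step is to examine the scalar function $g(R,S)$ along $\gamma$. Since $\gamma$ is a geodesic, $D_SS=0$, so $\tfrac{d}{ds}g(R,S)=g(D_SR,S)$ and, using the geodesic deviation equation~\eqref{eq-gde}, $\tfrac{d^2}{ds^2}g(R,S)=g(D_S^2R,S)=g(\mathcal{R}(S,R)S,S)$. This last expression vanishes by the symmetries of the Riemann tensor, so $g(R,S)$ is affine in $s$. Because $R$ vanishes at $s=0$ and $s=s_0$, the affine function $g(R,S)$ has two zeros and is therefore identically zero; differentiating at $s=0$ then yields $g(\hat R,k)=0$, where $k\equiv S|_{(0,0)}$ is the null tangent.

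This is where codimension two enters decisively. The normal space $T_pK^\perp$ is a two-dimensional Lorentzian plane, within which the orthogonal complement of the null vector $k$ is exactly $\mathrm{span}(k)$. Hence $g(\hat R,k)=0$ together with $\hat R\in T_pK^\perp$ forces $\hat R=\alpha k$ for some scalar $\alpha$. Then $R$ is the unique Jacobi field with $R|_{(0,0)}=0$ and $D_SR|_{(0,0)}=\alpha k$, namely $R=\alpha sS$, which cannot vanish at $s_0\neq 0$ unless $\alpha=0$. But $\alpha=0$ gives $\hat R=0$, contradicting $\hat R\neq 0$. Therefore $\check R\neq 0$.

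I expect the main obstacle to be the bookkeeping underlying the first paragraph: verifying that $D_SR|_{(0,0)}$ genuinely lies in $T_pK^\perp$ with no component tangent to $K$ (the would-be tangential term is linear in $\check R$ and so vanishes when $\check R=0$), and tracking the rescaling relating the kernel vector of $\exp_{K*}$ to the Jacobi initial data. This tangential-vanishing is what lets the two-dimensional Lorentzian structure of $T_pK^\perp$ do its work in the last paragraph; for $K$ of codimension greater than two, $g(\hat R,k)=0$ no longer forces $\hat R\parallel k$, which is exactly why the statement must weaken in that case.
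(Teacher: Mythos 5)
Your proof is correct, and while it rests on the same three pillars as the paper's --- propagating orthogonality of $R$ to $S$ along $\gamma$, the $(1+1)$-dimensional Lorentzian structure available in codimension two, and the fact that Jacobi fields proportional to $S$ have the form $(\alpha+\beta s)S$ --- it deploys them at a different place. The paper keeps $\check R$ general at first: it invokes Lemma~8.7 of \Ref{oneill} (your affine-function argument for $g(R,S)$ is precisely a proof of that fact) to get $R\perp S$ everywhere, and then argues \emph{along the geodesic}: since $\exp_{K*}$ is an isomorphism prior to the conjugate point, the fiber component $\exp_{K*}|_{(p,sS|_{(0,0)})}(0,s\tilde R)$ lies in a $(1+1)$-dimensional image subspace of the tangent space at $\gamma(s)$ containing the null vector $S$, hence is proportional to $S$, hence equals $\beta s S$; only at the very end does setting $\check R=0$ produce the contradiction. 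You instead assume $\check R=0$ from the outset and localize the entire argument at $p$: the initial derivative $D_S R|_{(0,0)}=\hat R$ lies in the manifestly $(1+1)$-dimensional Lorentzian plane $T_pK^\perp$, so $g(\hat R,k)=0$ forces $\hat R=\alpha k$, and uniqueness of solutions of the Jacobi equation then gives $R=\alpha s S$ globally. What your route buys is that it never needs the isomorphism property of $\exp_{K*}$ before conjugacy (Lemma~\ref{lem-iso}), substituting the elementary uniqueness theorem for linear ODEs; the paper's route, in exchange, exhibits the structural form $(\alpha+\beta s)S$ of the fiber component of the Jacobi field for arbitrary $\check R$, not just in the contradiction branch. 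The bookkeeping you flag --- that the shape-operator contribution to $D_SR|_{(0,0)}$ is linear in $\check R$ and so drops out when $\check R=0$ --- is exactly the right point to verify, and it does hold, consistent with the paper's remark following Lemma~\ref{lem-recon} that $\check R$ and $\tilde R$ encode the initial value and derivative of $R$.
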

\begin{proof} By Eqs.~\eqref{eq-r} and \eqref{eq-s}, the Jacobi field $R|_{(0,s)}$ is orthogonal to $\gamma$ at two points: at $p$ (by construction) and (trivially) at the assumed conjugate point. By Lemma~8.7 of Ref.~\cite{oneill}, this implies that $R_{(0,s)}\perp S|_{(0,s)}$ for all $s$. Again using Eqs.~(\ref{eq-r}) and ~(\ref{eq-s}), along with linearity of $\exp_{K*}$, this implies that $\tilde R \perp \tilde S$ and thus $\exp_{K*}|_{(p,sS|_{(0,0)})} (0,s \tilde R) \perp S.$

Prior to the conjugate point, the map $\exp_{K*}$ is a linear isomorphism; hence it maps the (1+1)-dimensional subspace $T_ST_pK^\perp\ni \tilde R$ of $T_{p,S}NK$ into a (1+1)-dimensional subspace $\exp_{K*} T_ST_pK^\perp$ of $T_{f(0,1)}M$. This subspace contains both the null tangent vector $S|_{(0,s)}$ and the component $\exp_{K*}|_{(p,sS|_{(0,0)})} (0,s\tilde R)$ of the Jacobi field $R$, which is itself a Jacobi field since our choice of initial data $\bar R$ was arbitrary. In a (1+1)-dimensional space, the only vectors orthogonal to a null vector $S$ are proportional to $S$. The general solution to \Eq{eq-gde} for a Jacobi field proportional to the tangent vector $S$ is $(\alpha + \beta s) S|_{(0,s)}$. Therefore $\exp_{K*}|_{(p,sS|_{(0,0)})} (0,s\tilde R)$ must have this form for some real constants $\alpha,\beta$. At $s=0$, $\exp_{K*}|_{(p,sS|_{(0,0)})}(0,s\tilde R)$ vanishes trivially, so $\alpha = 0$.

Now, suppose $\check R = 0$, so $R|_{(0,s)}$ is just $\beta s S|_{(0,s)}$. Since our Jacobi field is nontrivial and $S$ does not vanish, we must have $\beta \neq 0$. Thus, $R|_{(s,0)}$ vanishes only at $p$ and hence cannot vanish at $\exp_K(p,sS|_{(0,0)})$. This contradiction implies that $\check R \neq 0$.
\end{proof}

We now define a refinement of the notion of a conjugate point.

\begin{defn}\label{def-conjtoK} Let $\gamma(s)$ be a geodesic orthogonal to $K$ at $p$, with $\gamma(0)=p$ and with conjugate point $(p,v)$. Then there exists a nontrivial Jacobi field $R(s)\in TM$ that vanishes at $q=\exp_K (p,v)$ and is tangent to $K$ at $p.$ We say that $q$ is {\it conjugate to (the surface) $K$} if $R$ is nonvanishing at $p.$
\end{defn}

\begin{rem}\label{rmk-conjtoK} By Lemma~\ref{lem-conjtoK}, $\check R \neq 0$, so the Jacobi field associated with $\bar R$ as defined in \Eq{eq-r} does not vanish at $p$ and hence, if $(p, sS|_{(0,0)}) \in NK$ is a conjugate point, then the point $\exp_K(p,sS|_{(0,0)})$ is conjugate to $K$ for $\gamma$ null.
\end{rem}

Moreover, we can similarly define the notion of a point conjugate to another point.

\begin{defn}\label{def-ptpconj}
Given a nontrivial Jacobi field $R$ for a segment $\gamma$ of a geodesic such that $R$ vanishes at $p$ and $q$, we say that $q$ is {\it conjugate to (the point) $p$}.
\end{defn}

See \Fig{fig:Jacobi} for an illustration of the two types of conjugate points defined in Defs.~\ref{def-conjtoK} and \ref{def-ptpconj}.

\begin{figure}[h]
\includegraphics[width=0.7\columnwidth]{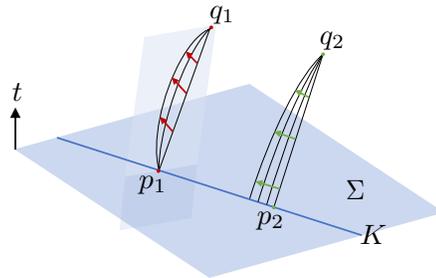}
\caption{The two types of conjugate points defined in Defs.~\ref{def-conjtoK} and \ref{def-ptpconj}. The point $q_1$ is conjugate to the {\em point} $p_1$, with the Jacobi field illustrated by the red arrows. The point $q_2$ is conjugate to the {\em surface} $K$ (blue line), at the point $p_2$, with the Jacobi field illustrated by the green arrows. Geodesics orthogonal to $K$ are shown in black. If a general conjugate point lies along an orthogonal null geodesic, then by Lemma~\ref{lem-conjtoK} there exists a Jacobi field such that the conjugate point is of the {\em surface} type. Hence, this type of conjugacy appears in Theorem~\ref{Theorem}.}\label{fig:Jacobi}
\end{figure}

\section{Proof of the Theorem}
\label{proof}

\renewcommand \qedsymbol{$\blacksquare$}

We now prove Theorem~\ref{Theorem}.
\begin{proof}
For the ``only if'' direction, we may assume that $b\in \dot I^+(K)$. Then conclusions $(i),(ii)$ are already established explicitly elsewhere in the literature (e.g., Theorem~9.3.11 of \Ref{Wald} and Theorem 7.27 of \Ref{Penrose}; see also Lemma~VII of \Ref{Penrose2}, as well as \Ref{Hawking2}). 

Conclusion $(iii)$ follows by contradiction: let $\gamma'$ be a distinct null geodesic orthogonal to $K$ that intersects $\gamma$ at some point $q$ strictly between $b$ and $K$. By acausality of $K$, $\gamma'\cap K$ is a single point, $p'$, which is distinct from $q$. Hence, $K$ can be connected to $b$ by a causal curve that is not an unbroken null geodesic, namely, by following $\gamma'$ from $p'$ to $q$ and $\gamma$ from $q$ to $b$. By Proposition 4.5.10 in Ref.~\cite{HawEll}, this implies that some $r\in K$ can be joined to $b$ by a timelike curve, in contradiction with $b\in \dot I^+(K)$. Hence, no such $\gamma'$ can exist.

The ``if'' direction of the theorem states that if $(i),(ii),(iii)$ hold, then $b\in \dot I^+(K)$. We will prove the following equivalent statement: If $b \notin \dot I^+(K)$ satisfies $(i)$, then $b$ will fail to satisfy $(ii)$ or $(iii)$. 

Let the geodesic $\gamma(s)$ guaranteed by $(i)$ be parametrized so that $\gamma(0)=p\equiv \gamma \cap K$ and $\gamma(1)=b$. By $(i)$, $b\in J^+(K)$, the causal future of $K$. By assumption, $b\notin \dot I^+(K)=\dot J^+(K)$, so it follows that $b\in I^+(K)$, the chronological future of $K$. Since $p\in \dot I^+(K)$, there exists an $s_*$ between 0 and 1 where $\gamma$ leaves the boundary of the future:
\begin{equation}
s_*\equiv\sup \gamma^{-1}(\gamma([0,1])\cap \dot I^+(K)).
\label{eq-sstar}
\end{equation}
The point where $\gamma$ leaves $\dot{I}^+{(K)}$, $q\equiv\gamma(s_*)$, lies in $\dot I^+(K)$.\footnote{This follows because $\dot I^+(K)$ is closed and hence its intersection with a closed segment of $\gamma$ is closed. Therefore, the argument of the supremum is a closed interval and the supremum is its upper endpoint.} Thus $s_*<1$. Moreover, $s_*> 0$ by the obvious generalization of Proposition 4.5.1 in~\Ref{HawEll} and achronality of $K$. We conclude that
\begin{equation}
p\in \dot I^-(q) \cap K,~~q\neq b,~\text{and}~q\neq p.
\end{equation}

Recall that $q=\gamma(s_*)$ is the future-most point on $\gamma$ that is not in $I^+(K)$. Let $s_n$ be a strictly decreasing sequence of real numbers that converges to $s_*$. That is, $s_n>s_*$ and, for $n$ sufficiently large, the points $q_n\equiv\gamma(s_n)$ exist and lie in $I^+(K)$. Now, since $K$ is acausal and $M$ is globally hyperbolic, there exists a Cauchy surface $\Sigma \supset K.$ Given $p_1, p_2 \in M$, define $C(p_1,p_2)$ to be the set of all causal curves from $p_1$ to $p_2$. Since by Corollary~6.6 of \Ref{Penrose} $C(\Sigma,q_n)$ is compact, it is closed and bounded. Thus, $C(K,q_n)\subset C(\Sigma,q_n)$ is bounded. Consider a sequence of curves $\mu_{m}$ from $K$ to $q_n$. By Lemma 6.2.1 of \Ref{HawEll}, the limit curve $\mu$ of $\{\mu_m\}$ is causal; since $K$ is compact and thus contains its limit points, $\mu$ runs from $K$ to $q_n$, so $\mu\in C(K,q_n)$. Hence, $C(K,q_n)$ is closed and therefore compact. Since the proper time is an upper semicontinuous function on $C(\Sigma,q_n)$, it attains its maximum over a compact domain, so we conclude in analogy with Theorem~9.4.5 of \Ref{Wald} that there exists a timelike geodesic $\gamma_n$ that maximizes the proper time from $K$ to $q_n$. By Theorem~9.4.3 of \Ref{Wald}, $\gamma_n$ is orthogonal to $K$. 

By construction, the point $q$ is a convergence point (and hence a limit point) of the sequence $\{\gamma_n\}$. By the time-reverse of Lemma 6.2.1 of Ref.~\cite{HawEll}, there exists, through $q$, a causal limit curve $\gamma'$ of the sequence $\{\gamma_n\}$. This curve must intersect $K$ because all $\gamma_n$ intersect $K$ and $K$ is compact. Since $\gamma'$ passes through $q\in \dot{I}^+(K)$, it must not be smoothly deformable to a timelike curve since $I^+(K)$ is open. Thus, by Theorem~9.3.10 of \Ref{Wald}, $\gamma'$ must be a null geodesic orthogonal to $K$, so if $\gamma'\neq \gamma$, condition $(iii)$ fails to hold. See \Fig{fig:proof} for an illustration.

The only alternative is that $\gamma$ is the only limit curve of the sequence $\{\gamma_n\}$. In this case, $\{\gamma_n\}$ contains a subsequence whose convergence curve is $\gamma$.  From now on, let $\{\gamma_n\}$ denote this subsequence. Orthogonality to $K$ of the $\gamma_n$ implies that we can write
\begin{equation}
  q_n=\exp_K (p_n, v_n),
\end{equation}
where $p_n=\gamma_n\cap K$, for some vector $v_n\in T_{p_n}K^\perp$ tangent to $\gamma_n$. But since $q_n\in \gamma$, we can also write
\begin{equation}
  q_n = \exp_K (p,k_n),
\end{equation}
where $k_n$ is tangent to $\gamma$. Thus, every $q_n$ has a non-unique pre-image.

By the above construction, the sequences $\{(p,k_n)\}$ and $\{(p_n,v_n)\}$ in $NK$ each have $(p,v)$ as their limit point, where $q=\exp_K(p,v)$. Hence there exists no open neighborhood $\cal O$ of $(p,v)$ such that $\exp_K$ is a diffeomorphism of $\cal O$ onto an open neighborhood of $q$. By Lemma~\ref{lem-iso}, it follows that $\exp_K$ is singular at $(p,v)$, i.e., $(p,v)$ is a conjugate point. By Lemma~\ref{lem-conjtoK} and Remark~\ref{rmk-conjtoK}, $q$ is conjugate to $K$. Thus, condition $(ii)$ fails to hold; again, see \Fig{fig:proof}.
\end{proof}

\renewcommand \qedsymbol{$\square$}

\begin{figure}[H]\begin{center}
\includegraphics[width=0.7\columnwidth]{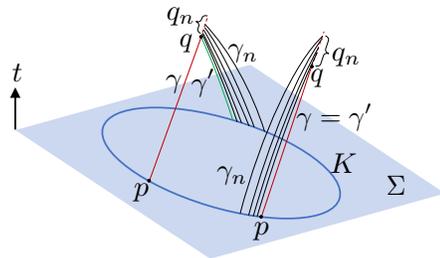}\end{center}
\caption{Possibilities in the proof. The sequence of timelike geodesics $\gamma_n$ (black) connects $K$ with a sequence of points $q_n \in I^+(K)$ on the orthogonal null geodesic $\gamma$ (red) that joins $p\in K$ with $q$, after which $\gamma$ leaves $\dot{I}^+(K)$ (red dashed). In the case on the left, $\gamma'$ (green) is distinct from $\gamma$, so condition~({\it iii}) fails. In the case on the right, $\gamma'=\gamma$, which we prove corresponds to a failure of condition ({\it ii}).}\label{fig:proof}
\end{figure}

\begin{rem}
The fact that $K$ had codimension two was only important in the last step in the proof of Theorem~\ref{Theorem}, i.e., going from knowing that $(p,v)$ is a conjugate point to showing that $q$ is conjugate to the surface $K$. For $K$ a compact, acausal submanifold that is not of codimension two, the steps in the proof of Theorem~\ref{Theorem} still establish that $(p,v)$ is a conjugate point in the sense of Def.~\ref{def-cpk}. Moreover, that the corresponding Jacobi field is orthogonal to $S$ remains true without the codimension-two assumption (see the proof of Lemma~\ref{lem-conjtoK}) and the one-parameter family of geodesics is orthogonal to $K$ (because it was defined via normal parallel transport). As a result, the Jacobi field defines a deviation of $\gamma$ in terms of only orthogonal null geodesics (as proven in, e.g., Corollary 10.40 of \Ref{oneill}), but in general that will not mean that $q$ is conjugate to the surface $K$ in the sense of Def.~\ref{def-conjtoK}. Specifically, the Jacobi field is not necessarily nonvanishing at $K$ if $K$ has codimension greater than two.
\end{rem}

\vskip .3cm
\indent {\bf Acknowledgments} 
It is a pleasure to thank Jason Koeller for initial collaboration on this project. We also thank Netta Engelhardt, Zachary Fisher, Stefan Leichenauer, and Robert Wald for helpful discussions and correspondence. We thank Umberto Lupo for pointing out Refs.~\cite{X,Y} to us after this paper first appeared. This work was supported in part by the Berkeley Center for Theoretical Physics, by the National Science Foundation (award numbers PHY-1521446, PHY-1316783), by FQXi, and by the U.S. Department of Energy under contract DE-AC02-05CH11231. G.N.R.\ is supported by the Miller Institute for Basic Research in Science at the University of California, Berkeley.

\begin{appendix}

\section{Proof of Lemma~\ref{lem-recon}} \label{app}

We now prove Lemma~\ref{lem-recon} by direct calculation.

\begin{proof}

We wish to show that
\be 
R|_{(0,s)} \equiv  \left. f_* \partial_r\right|_{(0,s)} = \left.\exp_{K*}\right|_{(p,sS|_{(0,0)})}(\check R,s \tilde R) \label{eq-ra}
\ee
and
\be 
S|_{(0,s)} \equiv  \left. f_* \partial_s \right|_{(0,s)} = \left.\exp_{K*}\right|_{(p,sS|_{(0,0)})}(0, \tilde S),\label{eq-sa}
\ee
where
\begin{equation}
\begin{aligned}
f(r,s) &= \exp_{f(r,0)} sS|_{(r,0)} \\&= \exp_K (f(r,0),sS|_{(r,0)}),\label{eq:appf}
\end{aligned}
\end{equation}
as defined in \Sec{sec-jac}. 

Using the definition of the pushforward, we can write $f_* \partial_r|_{(0,s)}$ as the differential $\exp_{K*}$, associated with $f$ in \Eq{eq:appf}, evaluated along the tangent direction $sS|_{(0,0)}$,
\begin{equation}
\begin{aligned}
R|_{(0,s)} \equiv& \left. f_* \partial_r\right|_{(0,s)}  \\
=& \left.\exp_{K*}\right|_{(p,sS|_{(0,0)})}(\check R, s\phi_*(\partial_r S|_{(r,0)})|_{r=0}).
\label{eq-ra2}
\end{aligned}
\end{equation}
In the second line, we used the definition of $\check R$ as the tangent to $\mu(r)$ at $p$, along with linearity of $\exp_{K*}$. We have again used the notation $\phi_*$ for the identity map between vectors in $T_p M$ and their naturally associated counterparts in $T_S T_p M$.

Next, we must evaluate the derivative of $S$, $\phi_*(\partial_r S|_{(r,0)})|_{r=0}\in T_S T_p K^\perp$. Let us write $S|_{(r,0)}$ as an explicit function of both the parameter $r$ along the path $\mu(r)\equiv f(r,0)\in K$ and the vector $S|_{(0,0)}+r \hat R\in T_p K^\perp$ that is normal parallel transported along $\mu$ from $\mu(0)=p$ to $\mu(r)$:
\begin{equation}
\begin{aligned}
S|_{(r,0)} &= S(r, S|_{(0,0)}+r \hat R)|_{r_1 = r_2 = r} \\&\equiv S(r_1,r_2)|_{r_1 = r_2 = r},
\end{aligned}
\end{equation} 
so that the derivative in question can be written as $\phi_*\partial_r S(r, S|_{(0,0)}+r \hat R)|_{r = 0}$. Since $S(r_1,r_2)$ is defined by normal parallel transporting a particular vector ($S|_{(0,0)}+r \hat R$) in $T_p K^\perp$ to $\mu(r_1)$, its variation with respect to $r_1$ gives the normal part of the covariant derivative of $S$ along $\mu$, which vanishes, i.e., $\partial_{r_1} S(r_1,r_2) = 0$. Hence,
\be 
\begin{aligned}
&	\frac{\partial}{\partial r} \left[S(r_1, S|_{(0,0)}+r_2 \hat R)|_{r_1 = r_2 = r}\right] \\&= \left[\frac{\partial}{\partial r_2} S(r,r_2)\right]_{r_2 = r} \\ & = \hat R.
	\end{aligned}
\ee
Inputting this result into \Eq{eq-ra2}, we have
\be
\begin{aligned}
R|_{(0,s)} 
=& \left.\exp_{K*}\right|_{(p,sS|_{(0,0)})}(\check R, s \phi_* \hat R)\\
=& \left.\exp_{K*}\right|_{(p,sS|_{(0,0)})}(\check R,s \tilde R).
\label{eq-ra3}
\end{aligned}
\ee
We have thus derived the claimed formula for the Jacobi field stated in \Eq{eq-ra}. The proof of \Eq{eq-sa} follows similarly. Neither $f(r,0)$ or $S|_{(r,0)}$ depend on $s$. Therefore
\be 
\begin{aligned}
S|_{(0,s)} \equiv& \left. f_* \partial_s\right|_{(0,s)}  \\
=& \partial_s \exp_K(f(0,0), s S|_{(0,0)})\\
=& \left.\exp_{K*}\right|_{(p,sS|_{(0,0)})}(0, \phi_* S|_{(0,0)}) \\
=& \left.\exp_{K*}\right|_{(p,sS|_{(0,0)})}(0, \tilde S).
\end{aligned}
\ee
This derivation of the Jacobi field and tangent vector completes the proof of Lemma~\ref{lem-recon}.
\end{proof}

\end{appendix} 

\newpage

\bibliography{Boundary}
\end{document}